\documentclass[pdflatex,sn-mathphys-num]{sn-jnl}

\usepackage{graphicx}%
\usepackage{multirow}%
\usepackage{amsmath,amssymb,amsfonts}%
\usepackage{amsthm}%
\usepackage{mathrsfs}%
\usepackage[title]{appendix}%
\usepackage{xcolor}%
\usepackage{textcomp}%
\usepackage{manyfoot}%
\usepackage{booktabs}%
\usepackage{algorithm}%
\usepackage{algorithmicx}%
\usepackage{algpseudocode}%
\usepackage{listings}%
\usepackage{bm}%
\usepackage{mathtools}%

\theoremstyle{thmstyleone}%
\newtheorem{theorem}{Theorem}%
\newtheorem{proposition}[theorem]{Proposition}%
\newtheorem{lemma}[theorem]{Lemma}%
\newtheorem{corollary}[theorem]{Corollary}%

\theoremstyle{thmstyletwo}%
\newtheorem{remark}{Remark}%

\theoremstyle{thmstylethree}%
\newtheorem{definition}{Definition}%
\newtheorem{assumption}{Assumption}%

\newcommand{\M}{\mathcal M}
\newcommand{\bM}{\partial\mathcal M}

\newcommand{\B}{\mathcal B}

\newcommand{\LG}{\mathcal L_G}
\newcommand{\LP}{\mathcal L_P}
\newcommand{\Leff}{\mathcal L_\Pi}
\newcommand{\dd}{\mathrm d}

\newcommand{\Dom}{\operatorname{Dom}}
\newcommand{\spec}{\operatorname{spec}}
\newcommand{\R}{\mathcal R}

\newcommand{\Q}{\mathcal Q}

\newcommand{\HH}{\mathscr H}
\newcommand{\VV}{\mathscr V}

\newcommand{\C}{\mathcal C}
\newcommand{\J}{\mathcal J}
\newcommand{\dual}[2]{\left\langle #1,#2\right\rangle}
\newcommand{\HS}{\mathcal S}

\begin{document}

\title[Variational Openness]{Variational Openness}

\author*[1]{\fnm{Francisco} \sur{Monroy}}\email{monroy@ucm.es}

\affil*[1]{\orgdiv{Departamento de Qu\'imica F\'isica}, \orgname{Universidad Complutense de Madrid}, \orgaddress{\city{Madrid}, \country{Spain}}}

\abstract{
Variational principles in mechanics, field theory and geometric analysis are usually formulated on closed admissible classes, where boundary variations are either fixed or independently cancelled through natural boundary conditions. Variational openness is formulated here as a conservative extension of this setting. Its central premise is that stationarity requires cancellation of the total first variation, not necessarily separate cancellation of bulk and boundary contributions. Separate Euler--Lagrange and boundary equations arise only when admissible variations are independently localizable. Two regimes are distinguished. In separable open systems, bulk and boundary variations remain independently testable, and stationarity yields the usual interior equation together with an open boundary balance. In regulated open systems, admissible variations form a graph subspace in which bulk and boundary displacements are linked by a compatibility operator. Stationarity then becomes a projected balance on the admissible exchange space, allowing nontrivial bulk--boundary action exchange before total cancellation occurs. At second order, the open action defines a closed quadratic form on the admissible graph space. For pressure-like boundary couplings, the open Hessian is obtained by subtracting from the stabilizing geometric form a boundary-pressure form pulled back through the compatibility operator. A Rayleigh--Ritz criterion then yields a critical threshold at which positivity and coercivity are lost. A minimal spherical example illustrates the corresponding regulated spectral shift. The framework contains fixed-boundary, natural-boundary and classical free-boundary problems as limiting cases, while extending stationarity to regulated bulk--boundary exchange classes.
}

\keywords{calculus of variations, free boundary problems, variational principles, natural boundary conditions, Hamilton--Jacobi theory, quadratic forms, spectral instability, mathematical physics}

\pacs[MSC Classification]{49K20, 49Q10, 58E30, 35P15, 35J20, 35B35}

\maketitle

\section{Introduction}\label{sec:introduction}

Hamilton's variational principle and its field-theoretic extensions provide a standard route from stationarity of an action to equations of motion \cite{LandauLifshitz1976}. In the elementary mechanical formulation, one considers variations with fixed endpoints and requires the first variation of the action to vanish. More generally, after integration by parts, the first variation separates into an interior Euler--Lagrange term and a boundary variational flux \cite{Gelfand1963,Elsgoltz1969,Giaquinta1996,Dacorogna2008}. Standard formulations suppress the boundary contribution either by fixing the trace of the admissible variations or by imposing a natural boundary condition. In this sense, the admissible variational class is closed at the boundary \cite{Lanczos1970,Goldstein2002,Kibble2004,Arnold1989}.

The separate cancellation of interior and boundary terms is not, however, a primitive variational principle. It follows from a structural assumption on the admissible variations: the possibility of localizing bulk and boundary variations independently. Once this independence is available, the fundamental lemma of the calculus of variations separates the total first variation into an interior equation and a boundary equation. Without independent localization of admissible variations, this separation is no longer enforced by the variational principle itself.

Free-endpoint and transversality problems already show that boundary data may enter nontrivially into stationarity \cite{Gelfand1963}. Geometric and free-boundary variational problems further demonstrate that admissible boundary deformations can participate directly in variational balance laws \cite{Courant1953,Giaquinta1996}. The present work isolates the underlying structural principle. We ask what follows if the boundary sector is retained as an active variational component rather than eliminated by admissibility constraints.

We call this structure \emph{variational openness}. The term refers exclusively to the admissible variational class and does not imply stochastic forcing, dissipation or microscopic environmental coupling. An open variational problem is one for which stationarity is imposed on the total first variation,
\begin{equation}
\delta S_{\mathrm{bulk}}
+
\delta S_{\partial}
=0,
\label{eq:intro_open_stationarity}
\end{equation}
with nontrivial admissible boundary variations.

Two regimes must then be distinguished. In a \emph{separable open system}, bulk and boundary variations remain independently testable. The open principle therefore yields a bulk Euler--Lagrange equation together with a generalized natural boundary condition. In a \emph{regulated open system}, admissible variations form a proper graph subspace in which bulk and boundary displacements are linked by a compatibility operator. Stationarity is then a projected balance on the admissible graph space rather than a separate annihilation of bulk and boundary first variations. In this regime, nonzero bulk and boundary contributions may cancel only after restriction to the admissible exchange class.

The first objective of the paper is to formulate this regulated variational structure precisely. The second is to determine its second-variation and spectral consequences. Around a stationary configuration, the open action defines a quadratic form on the admissible graph space. For pressure-like boundary couplings, the corresponding open Hessian takes the form
\begin{equation}
\Q_\Pi[u]
=
\Q_G[u]
-
\Pi
\Q_P^\partial[\mathcal C u].
\label{eq:intro_open_hessian}
\end{equation}
The relevant stability directions are therefore not arbitrary bulk and boundary perturbations, but compatible exchange modes selected by the admissibility operator.

The main result is a Rayleigh--Ritz criterion for loss of positivity of the open Hessian. Under standard assumptions on the associated quadratic forms, there exists a critical threshold
\begin{equation}
\Pi_c
=
\inf_{u\neq0}
\frac{\Q_G[u]}
{\Q_P^\partial[\mathcal C u]},
\label{eq:intro_rayleigh_threshold}
\end{equation}
defined on the pressure-active admissible subspace. For \(\Pi<\Pi_c\), the open Hessian remains positive on the admissible graph space; for \(\Pi>\Pi_c\), it develops a negative direction, and, under compactness assumptions, a negative eigenvalue.

The paper is organized as follows. Section~\ref{sec:closed-open} defines closed, separable open and regulated open variational systems. Section~\ref{sec:action_exchange} formulates projected stationarity and variational action exchange. The term “exchange” refers only to cancellation structure on the admissible graph space. Section~\ref{sec:open_EL} derives the corresponding Euler--Lagrange balance laws. Section~\ref{sec:open_HJ} records the Hamilton--Jacobi representation. Section~\ref{sec:boundary_geometry} gives the geometric boundary formulation. Section~\ref{sec:open_hessian} introduces the open Hessian as a closed quadratic form on the admissible graph space. Section~\ref{sec:spectral_theorem} proves the Rayleigh--Ritz instability criterion. Section~\ref{sec:spherical_example} gives a minimal regulated spherical example. Section~\ref{sec:discussion} discusses the relation with classical variational theory and free-boundary problems.

\section{Closed, separable open and regulated open variational systems}\label{sec:closed-open}

Let \(\M\) be a smooth compact oriented \(d\)-dimensional manifold with smooth boundary \(\bM\). Let \(E\to\M\) be a smooth vector bundle and let \(\phi\) be a section of \(E\). For definiteness, we consider first-order Lagrangians,
\begin{equation}
S_{\mathrm{bulk}}[\phi]
=
\int_{\M}L(j^1\phi)\,\dd V,
\label{eq:bulk_action}
\end{equation}
where \(j^1\phi\) is the first jet of \(\phi\).

Let \(\gamma\phi\) denote the trace of \(\phi\) on \(\bM\). In the Sobolev setting,
\begin{equation}
\gamma:H^1(\M;E)\longrightarrow H^{1/2}(\bM;E|_{\bM}),
\label{eq:trace_map}
\end{equation}
and we write
\begin{equation}
\varphi:=\gamma\phi
\end{equation}
for the induced boundary field.

A closed variational problem restricts admissible variations by
\begin{equation}
\gamma v=0,
\label{eq:closed_variations}
\end{equation}
or, for free boundary values, requires the boundary variational flux to vanish through a natural boundary condition. In both cases, the boundary does not define an independent exchange channel for stationarity.

To unfold the bulk--boundary structure, let
\begin{equation}
\VV_{\mathrm{bulk}}\subset H^1(\M;E),
\qquad
\VV_{\partial}\subset H^{1/2}(\bM;E|_{\bM})
\label{eq:bulk_boundary_test_spaces}
\end{equation}
be the bulk and boundary test spaces, and define
\begin{equation}
\VV_{\mathrm{prod}}
:=
\VV_{\mathrm{bulk}}\oplus\VV_{\partial}.
\label{eq:product_variation_space}
\end{equation}
Elements of \(\VV_{\mathrm{prod}}\) are denoted \((v,w)\), where \(v\) is a bulk variation and \(w\) is a boundary variation. This product representation is an unfolded form of the variational derivative; admissibility is imposed by selecting either independent tests or a constrained graph.

\begin{definition}[Separable open variational system]\label{def:separable_open_system}
A separable open variational system has action
\begin{equation}
S[\phi]
=
S_{\mathrm{bulk}}[\phi]
+
S_{\partial}[\varphi],
\qquad
\varphi=\gamma\phi,
\label{eq:open_action}
\end{equation}
with
\begin{equation}
S_{\partial}[\varphi]
=
\int_{\bM}\B(j^1_{\partial}\varphi,h)\,\dd A_h.
\label{eq:boundary_action}
\end{equation}
Stationarity is imposed for admissible variations whose bulk and boundary components are independently testable. No fixed-trace condition is imposed a priori.
\end{definition}

\begin{definition}[Regulated open variational system]\label{def:regulated_open_system}
A regulated open variational system is an open variational problem in which admissible bulk and boundary variations are linked by a bounded linear compatibility operator
\begin{equation}
\C:\VV_{\mathrm{bulk}}\longrightarrow \VV_{\partial}.
\label{eq:compatibility_operator}
\end{equation}
The admissible variation space is the graph
\begin{equation}
\VV_{\mathrm{op}}
:=
\operatorname{Graph}(\C)
=
\{(v,\C v):v\in\VV_{\mathrm{bulk}}\}
\subset
\VV_{\mathrm{prod}}.
\label{eq:open_graph_space}
\end{equation}
It is equipped with the graph norm
\begin{equation}
\|(v,\C v)\|_{\VV_{\mathrm{op}}}^{2}
=
\|v\|_{\VV_{\mathrm{bulk}}}^{2}
+
\|\C v\|_{\VV_{\partial}}^{2}.
\label{eq:graph_norm}
\end{equation}
Stationarity is imposed only after restriction to this graph:
\begin{equation}
\delta S_{\mathrm{bulk}}[\phi](v)
+
\delta S_{\partial}[\varphi](\C v)
=
0,
\qquad
\varphi=\gamma\phi,
\quad
\forall (v,\C v)\in\VV_{\mathrm{op}}.
\label{eq:regulated_open_stationarity}
\end{equation}
\end{definition}

The boundedness of \(\C\) makes \(\VV_{\mathrm{op}}\) a closed admissible subspace of \(\VV_{\mathrm{prod}}\) in the standard graph-norm sense \cite{Conway1990,Brezis2011}. This is the regulation condition: the boundary sector cannot cancel arbitrary bulk variations, but only those transmitted through the specified compatibility channel.

Closed, separable and regulated problems correspond to different admissible classes. If \(\C=0\), the unfolded boundary component is suppressed; the usual fixed-boundary problem is recovered when \(\VV_{\mathrm{bulk}}\) is additionally chosen with fixed trace, \(\gamma v=0\). If boundary trace directions are independently testable, one obtains the separable natural-boundary case. If \(\C\) defines a nontrivial graph, stationarity is a projected condition on \(\operatorname{Graph}(\C)\), not separate annihilation on \(\VV_{\mathrm{bulk}}\) and \(\VV_{\partial}\).

\section{Action exchange and projected stationarity}\label{sec:action_exchange}

The unfolded first variation is a functional on
\(\VV_{\mathrm{prod}}=\VV_{\mathrm{bulk}}\oplus\VV_{\partial}\). For \((v,w)\in\VV_{\mathrm{prod}}\), write
\begin{equation}
\delta S[\phi](v,w)
=
\J_{\mathrm{bulk}}[\phi](v)
+
\J_{\partial}[\phi](w),
\label{eq:first_variation_product_split}
\end{equation}
with
\begin{equation}
\J_{\mathrm{bulk}}[\phi](v)
:=
\int_{\M}\dual{E_L(\phi)}{v}\,\dd V,
\label{eq:Jbulk_def}
\end{equation}
and
\begin{equation}
\J_{\partial}[\phi](w)
:=
\int_{\bM}
\dual{
\Pi_L(\phi)+E_{\partial\B}(\gamma\phi)
}{w}
\,\dd A_h,
\label{eq:Jboundary_def}
\end{equation}
up to corner terms. In a regulated open system the admissible pairs are restricted to
\[
(v,w)=(v,\C v).
\]
Thus the first variation pulled back to the graph is
\begin{equation}
\mathfrak F_{\phi}(v)
:=
\delta S[\phi](v,\C v)
=
\J_{\mathrm{bulk}}[\phi](v)
+
\J_{\partial}[\phi](\C v).
\label{eq:projected_variational_functional}
\end{equation}
Regulated stationarity is
\begin{equation}
\mathfrak F_{\phi}=0
\quad
\text{in }\VV_{\mathrm{bulk}}'.
\label{eq:projected_stationarity}
\end{equation}

Since \(\C:\VV_{\mathrm{bulk}}\to\VV_{\partial}\) is bounded, its adjoint
\begin{equation}
\C^*:\VV_{\partial}'\longrightarrow \VV_{\mathrm{bulk}}'
\label{eq:compatibility_adjoint}
\end{equation}
is well defined. Let
\begin{equation}
B_{\partial}(\phi)
:=
\Pi_L(\phi)+E_{\partial\B}(\gamma\phi)
\in \VV_{\partial}'
\label{eq:boundary_force_functional}
\end{equation}
be the total boundary variational flux. Then
\begin{equation}
\J_{\partial}[\phi](\C v)
=
\dual{B_{\partial}(\phi)}{\C v}_{\partial}
=
\dual{\C^*B_{\partial}(\phi)}{v}_{\mathrm{bulk}}.
\label{eq:adjoint_pullback_boundary}
\end{equation}
Therefore regulated stationarity is equivalently
\begin{equation}
E_L(\phi)
+
\C^*
\left[
\Pi_L(\phi)+E_{\partial\B}(\gamma\phi)
\right]
=
0
\quad
\text{in }\VV_{\mathrm{bulk}}'.
\label{eq:projected_EL_adjoint}
\end{equation}
This is the projected Euler--Lagrange balance. The boundary flux contributes only after pullback by the admissibility operator.

\begin{lemma}[Failure of separate cancellation]\label{lem:failure_separate_cancellation}
Let
\[
\VV_{\mathrm{op}}
=
\operatorname{Graph}(\C)
\subset
\VV_{\mathrm{bulk}}\oplus\VV_{\partial}
\]
be a proper graph subspace. If \(\VV_{\mathrm{op}}\) does not contain arbitrary independent directions \((v,0)\) and \((0,w)\), then
\begin{equation}
\J_{\mathrm{bulk}}(v)+\J_{\partial}(\C v)=0
\quad
\forall v\in\VV_{\mathrm{bulk}}
\label{eq:graph_stationarity_condition}
\end{equation}
does not imply, in general,
\begin{equation}
\J_{\mathrm{bulk}}=0,
\qquad
\J_{\partial}=0
\label{eq:separate_cancellation_false}
\end{equation}
on \(\VV_{\mathrm{bulk}}\) and \(\VV_{\partial}\), respectively.
\end{lemma}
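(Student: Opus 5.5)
The statement is a "does not imply in general" claim, so it suffices to show that the graph condition \eqref{eq:graph_stationarity_condition} does not force \eqref{eq:separate_cancellation_false}, and then to exhibit a concrete counterexample. I would proceed in three steps: a linear-algebraic characterization of all admissible pairs \((\J_{\mathrm{bulk}},\J_{\partial})\), a finite-dimensional instance, and a remark on how the instance embeds into the Sobolev setting of Section~\ref{sec:closed-open}.

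\textbf{Characterization.} Using the adjoint pullback \eqref{eq:adjoint_pullback_boundary}, condition \eqref{eq:graph_stationarity_condition} is equivalent to
\[
\J_{\mathrm{bulk}}=-\C^*\J_{\partial}\quad\text{in }\VV_{\mathrm{bulk}}'.
\]
This is the functional form of \eqref{eq:projected_EL_adjoint}. Hence the set of stationary pairs is exactly \(\{(-\C^*\ell,\ell):\ell\in\VV_{\partial}'\}\). Separate cancellation holds for every such pair only if this set is \(\{0\}\), which forces \(\VV_{\partial}'=0\). So as soon as \(\VV_{\partial}\neq\{0\}\), every \(\ell\neq0\) gives \(\J_{\partial}=\ell\neq0\) on \(\VV_{\partial}\) while \eqref{eq:graph_stationarity_condition} holds. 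If moreover \(\C^*\ell\neq0\), then \(\J_{\mathrm{bulk}}\neq0\) as well. Both contributions are then nonzero, yet they cancel on the graph, which is precisely the action exchange the lemma describes. Choosing \(\ell\) with \(\C^*\ell\neq0\) requires \(\C\neq0\), since \(\ker\C^*=(\operatorname{ran}\C)^\perp\). The "proper graph" hypothesis is what rules out testing with \((0,w)\) alone, which would otherwise kill \(\ell\).

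\textbf{Concrete instance.} Take \(\VV_{\mathrm{bulk}}=\VV_{\partial}=\mathbb R\), \(\C v=v\), \(\J_{\partial}(w)=w\) and \(\J_{\mathrm{bulk}}(v)=-v\). Then \(\J_{\mathrm{bulk}}(v)+\J_{\partial}(\C v)=0\) for all \(v\), but neither functional vanishes. To make this concrete in the manifold setting, take a nonzero \(v_0\in\VV_{\mathrm{bulk}}\) with \(\C v_0\neq0\), set \(\ell\) to be the Riesz functional of \(\C v_0\) in \(\VV_{\partial}\), and put \(\J_{\mathrm{bulk}}:=-\C^*\ell\). Then \(\J_{\mathrm{bulk}}(v_0)=-\|\C v_0\|^2\neq0\).

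\textbf{Main obstacle.} The main obstacle is interpretive rather than technical. The lemma speaks of functionals \(\J_{\mathrm{bulk}},\J_{\partial}\), not of functionals arising from an actual Lagrangian. I would state explicitly that the claim concerns the abstract linear condition. One could optionally note that a suitable choice of \(L\) and \(\B\) realizes the prescribed \(E_L\) and boundary flux, for instance by adding a linear source term \(\int_\M\langle f,\phi\rangle\) and a boundary term \(\int_{\bM}\langle g,\varphi\rangle\). I would not attempt a full realization argument beyond that remark.
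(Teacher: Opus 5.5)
Your proposal is correct and follows the paper's own argument: both rewrite graph stationarity as $\J_{\mathrm{bulk}}=-\C^*\J_{\partial}$ in $\VV_{\mathrm{bulk}}'$ and note that a nonzero pair of this form cancels on $\operatorname{Graph}(\C)$, and you additionally supply the explicit witness that the paper leaves implicit (any nonzero $\ell\in\VV_{\partial}'$, plus the Riesz construction giving $\C^*\ell\neq0$). One small correction: the graph of a linear operator never contains $(0,w)$ with $w\neq0$, so it is not properness that excludes such tests; properness is equivalent to $\VV_{\partial}\neq\{0\}$, which is exactly what guarantees that your nonzero $\ell$ exists.
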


\begin{proof}
Equation~\eqref{eq:graph_stationarity_condition} states that the product functional
\[
(v,w)\mapsto \J_{\mathrm{bulk}}(v)+\J_{\partial}(w)
\]
vanishes only after restriction to \(w=\C v\). Since \((v,0)\) and \((0,w)\) are not arbitrary admissible tests on the graph, the two components cannot be separated by the fundamental lemma. Equivalently, a nonzero pair satisfying
\[
\J_{\mathrm{bulk}}=-\C^*\J_{\partial}
\quad\text{in }\VV_{\mathrm{bulk}}'
\]
has zero total variation on \(\operatorname{Graph}(\C)\) without requiring either component to vanish on the unconstrained product space.
\end{proof}

\begin{corollary}[Nontrivial exchange]\label{cor:nontrivial_exchange}
If \(B_{\partial}\neq0\) and
\begin{equation}
E_L=-\C^*B_{\partial}
\quad
\text{in }\VV_{\mathrm{bulk}}',
\label{eq:nontrivial_exchange_condition}
\end{equation}
then the total first variation vanishes on \(\operatorname{Graph}(\C)\), although the bulk and boundary variational contributions need not vanish separately.
\end{corollary}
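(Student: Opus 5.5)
The plan is to substitute the hypothesis \eqref{eq:nontrivial_exchange_condition} directly into the pulled-back functional \(\mathfrak F_\phi\) of \eqref{eq:projected_variational_functional}, using the adjoint identity \eqref{eq:adjoint_pullback_boundary}. For arbitrary \(v\in\VV_{\mathrm{bulk}}\) I will write
\[
\mathfrak F_\phi(v)
=
\dual{E_L}{v}_{\mathrm{bulk}}
+
\dual{B_\partial}{\C v}_{\partial}
=
\dual{E_L+\C^*B_\partial}{v}_{\mathrm{bulk}},
\]
where the first term is \(\J_{\mathrm{bulk}}(v)\) read as the pairing of \(E_L\in\VV_{\mathrm{bulk}}'\) with \(v\), and the second is \(\J_\partial(\C v)\). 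By hypothesis \(E_L+\C^*B_\partial=0\) in \(\VV_{\mathrm{bulk}}'\), so \(\mathfrak F_\phi=0\). This is exactly regulated stationarity \eqref{eq:projected_stationarity}, i.e.\ the total first variation vanishes on \(\operatorname{Graph}(\C)\), which is the first assertion.

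For the second assertion (the contributions need not vanish separately), I will note first that \(B_\partial\neq0\) gives \(\J_\partial\neq0\) as a functional on \(\VV_\partial\), by the definition \eqref{eq:Jboundary_def}, so the boundary contribution is nonzero on the unconstrained product space. For the bulk side, \(E_L=-\C^*B_\partial\), which is nonzero whenever \(B_\partial\notin\ker\C^*\), that is, whenever \(B_\partial\) does not annihilate \(\operatorname{Ran}\C\). The phrase ``need not'' only requires that separate cancellation is not forced, and Lemma~\ref{lem:failure_separate_cancellation} already supplies this: the pair \((E_L,B_\partial)\) is precisely of the form \(\J_{\mathrm{bulk}}=-\C^*\J_\partial\) treated in its proof. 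I will invoke that lemma and, for concreteness, remark that choosing \(B_\partial\) with \(\C^*B_\partial\neq0\) produces an instance where both \(\J_{\mathrm{bulk}}\) and \(\J_\partial\) are nonzero while \(\mathfrak F_\phi=0\).

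The only delicate point is making the identification \(\J_{\mathrm{bulk}}(v)=\dual{E_L}{v}_{\mathrm{bulk}}\) and \(\J_\partial(w)=\dual{B_\partial}{w}_\partial\) rigorous in the dual spaces. The integrals in \eqref{eq:Jbulk_def} and \eqref{eq:Jboundary_def} must be interpreted as duality pairings, and corner terms are neglected as in the paper. I will take \(E_L\in\VV_{\mathrm{bulk}}'\) and \(B_\partial\in\VV_\partial'\) as standing interpretations, consistent with \eqref{eq:boundary_force_functional}. Once these are granted, the argument is a one-line pullback computation.
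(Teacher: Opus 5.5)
Your proposal is correct and follows the paper's reasoning. The paper gives no separate proof of this corollary: it is read off from the adjoint identity \eqref{eq:adjoint_pullback_boundary} and the last step of the proof of Lemma~\ref{lem:failure_separate_cancellation}. That is exactly your pullback computation $\mathfrak F_\phi(v)=\dual{E_L+\C^*B_\partial}{v}_{\mathrm{bulk}}=0$, together with your observation that $B_\partial\neq0$ already makes $\J_\partial$ nonzero on $\VV_\partial$. Your extra remark that choosing $\C^*B_\partial\neq0$ gives a genuine instance of Definition~\ref{def:action_exchange} is also correct.
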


\begin{definition}[Variational action exchange]\label{def:action_exchange}
A stationary regulated open configuration exhibits variational action exchange if, for some admissible graph variation \((v,\C v)\),
\begin{equation}
\J_{\mathrm{bulk}}[\phi](v)\neq0,
\qquad
\J_{\partial}[\phi](\C v)\neq0,
\qquad
\J_{\mathrm{bulk}}[\phi](v)
=
-
\J_{\partial}[\phi](\C v).
\label{eq:mutual_cancellation}
\end{equation}
\end{definition}

Thus regulated action exchange is the restricted stationarity condition
\begin{equation}
(S_{\mathrm{bulk}}+S_{\partial})'|_{\operatorname{Graph}(\C)}=0,
\label{eq:restricted_total_variation}
\end{equation}
rather than separate stationarity of the two terms.

\section{Open Euler--Lagrange balance}\label{sec:open_EL}

We now record the Euler--Lagrange consequences of separable and regulated admissibility. The bulk first variation has the standard decomposition
\begin{equation}
\delta S_{\mathrm{bulk}}[\phi](v)
=
\int_{\M}\dual{E_L(\phi)}{v}\,\dd V
+
\int_{\bM}\dual{\Pi_L(\phi)}{\gamma v}\,\dd A_h,
\label{eq:bulk_variation}
\end{equation}
where \(E_L(\phi)\) is the Euler--Lagrange expression and \(\Pi_L(\phi)\) is the boundary variational flux. The boundary action satisfies
\begin{equation}
\delta S_{\partial}[\varphi](w)
=
\int_{\bM}\dual{E_{\partial\B}(\varphi)}{w}\,\dd A_h
+
\int_{\partial(\bM)}\Theta_{\partial\B}(\varphi,w),
\label{eq:boundary_variation}
\end{equation}
with \(w\in\VV_{\partial}\). We assume that the corner term either vanishes or is cancelled by admissibility conditions or by an additional corner functional.

For a general unfolded pair \((v,w)\in\VV_{\mathrm{bulk}}\oplus\VV_{\partial}\),
\begin{equation}
\delta S[\phi](v,w)
=
\int_{\M}\dual{E_L(\phi)}{v}\,\dd V
+
\int_{\bM}
\dual{
\Pi_L(\phi)+E_{\partial\B}(\gamma\phi)
}{w}
\,\dd A_h.
\label{eq:total_first_variation_product}
\end{equation}
The equations obtained from \eqref{eq:total_first_variation_product} depend on the admissible test class.

\begin{theorem}[Separable open Euler--Lagrange equations]\label{thm:separable_open_EL}
Let \(\phi\) be a stationary point of the separable open action \eqref{eq:open_action}. If bulk and boundary variations are independently localizable, then
\begin{align}
E_L(\phi)&=0 &&\text{in }\M,
\label{eq:bulk_EL}\\
\Pi_L(\phi)+E_{\partial\B}(\gamma\phi)&=0 &&\text{on }\bM.
\label{eq:open_boundary_condition}
\end{align}
Conversely, if \eqref{eq:bulk_EL}--\eqref{eq:open_boundary_condition} hold and the corner contribution in \eqref{eq:boundary_variation} vanishes, then \(\delta S[\phi](v,w)=0\) for all separable admissible variations.
\end{theorem}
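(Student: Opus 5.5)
The plan is to work directly from the unfolded first variation \eqref{eq:total_first_variation_product}. That formula is obtained by adding the bulk decomposition \eqref{eq:bulk_variation} to the boundary variation \eqref{eq:boundary_variation}, with \(w=\gamma v\) in the flux term and the corner term discarded by the standing assumption. Stationarity means
\[
\int_{\M}\dual{E_L(\phi)}{v}\,\dd V+\int_{\bM}\dual{B_\partial(\phi)}{w}\,\dd A_h=0
\]
for every separable admissible pair \((v,w)\), where \(B_\partial(\phi)=\Pi_L(\phi)+E_{\partial\B}(\gamma\phi)\) as in \eqref{eq:boundary_force_functional}. I would make the hypothesis of independent localizability precise as follows: the admissible class contains all pairs \((v,0)\) with \(v\in C_c^\infty(\mathrm{int}\,\M;E)\), and all pairs \((0,w)\) with \(w\) ranging over a dense subspace of smooth boundary sections, for instance \(C_c^\infty\) sections supported away from \(\partial(\bM)\) if corners are present.

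\emph{Bulk equation.} Testing with \((v,0)\), where \(v\) has compact support in the interior, kills the boundary integral. This gives \(\int_{\M}\dual{E_L(\phi)}{v}\,\dd V=0\) for all such \(v\). By the fundamental lemma of the calculus of variations, applied locally in trivializations of \(E\), this yields \eqref{eq:bulk_EL}: it holds pointwise if \(\phi\) is regular enough that \(E_L(\phi)\) is continuous, and in the distributional sense otherwise.

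\emph{Boundary equation.} With \(E_L(\phi)=0\) established, the bulk integral vanishes for every admissible \(v\), not only compactly supported ones. Testing with \((0,w)\) then leaves \(\int_{\bM}\dual{B_\partial(\phi)}{w}\,\dd A_h=0\). Applying the fundamental lemma on the closed manifold \(\bM\), or on its interior if corners are present, gives \eqref{eq:open_boundary_condition}. Ordering the steps this way avoids needing bulk variations with prescribed traces. If the admissible class is stated only as independent pairs \((v,w)\), the order does not matter.

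\emph{Main obstacle.} The only delicate point is regularity and function-space meaning. For \(\phi\in H^1\) only, \(E_L(\phi)\) is a distribution and \(\Pi_L(\phi)\) is a conormal trace in \(H^{-1/2}(\bM)\), so the identities should be read in \(\VV_{\mathrm{bulk}}'\) and \(\VV_\partial'\). In that reading, the fundamental lemma is replaced by density of the test spaces: a functional vanishing on a dense subspace is zero. I would first assume enough smoothness (\(\phi\in C^2\) up to the boundary) so that classical integration by parts and pointwise conclusions hold, and then remark that the same argument goes through weakly by density.

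\emph{Converse.} If \eqref{eq:bulk_EL} and \eqref{eq:open_boundary_condition} hold and the corner term vanishes, both integrands in \eqref{eq:total_first_variation_product} are identically zero. Hence \(\delta S[\phi](v,w)=0\) for every admissible pair, which completes the proof.
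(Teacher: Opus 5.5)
Your proposal is correct and follows the paper's proof. Testing the unfolded first variation \eqref{eq:total_first_variation_product} with interior-supported pairs \((v,0)\) gives \eqref{eq:bulk_EL}, testing with \((0,w)\) gives \eqref{eq:open_boundary_condition}, and the converse is substitution. Your remarks on reading the identities weakly in \(\VV_{\mathrm{bulk}}'\) and \(\VV_{\partial}'\), and on taking \(w\) supported away from \(\partial(\bM)\), only make explicit what the paper leaves implicit; one small correction is that the ordering step does not avoid bulk variations with prescribed traces \(\gamma v=w\) in the trace-based reading, but only avoids having to control their interior part.
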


\begin{proof}
Independent tests of the form \((v,0)\), with \(v\) compactly supported in \(\M\), give \eqref{eq:bulk_EL}. Independent boundary tests \((0,w)\) give \eqref{eq:open_boundary_condition}. The converse follows by substitution in \eqref{eq:total_first_variation_product}.
\end{proof}

\begin{theorem}[Regulated open Euler--Lagrange balance]\label{thm:regulated_open_EL}
Let \(\phi\) be stationary on the regulated graph
\[
\VV_{\mathrm{op}}
=
\operatorname{Graph}(\C).
\]
Then
\begin{equation}
\int_{\M}\dual{E_L(\phi)}{v}\,\dd V
+
\int_{\bM}
\dual{
\Pi_L(\phi)+E_{\partial\B}(\gamma\phi)
}{\C v}
\,\dd A_h
=
0
\quad
\forall v\in\VV_{\mathrm{bulk}}.
\label{eq:regulated_EL_balance}
\end{equation}
Equivalently,
\begin{equation}
E_L(\phi)
+
\C^*
\left[
\Pi_L(\phi)+E_{\partial\B}(\gamma\phi)
\right]
=
0
\quad
\text{in }\VV_{\mathrm{bulk}}'.
\label{eq:regulated_EL_adjoint_balance}
\end{equation}
\end{theorem}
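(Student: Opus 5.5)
The plan is to unpack the definition of regulated stationarity, \eqref{eq:regulated_open_stationarity}, and rewrite both variations using the decompositions \eqref{eq:bulk_variation} and \eqref{eq:boundary_variation}. The identity \eqref{eq:regulated_EL_balance} then follows, and the adjoint form \eqref{eq:regulated_EL_adjoint_balance} is obtained by transposing the boundary term through \(\C^*\), exactly as in \eqref{eq:adjoint_pullback_boundary}.

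\textbf{Step 1.} Fix \(v\in\VV_{\mathrm{bulk}}\). By hypothesis, \(\delta S_{\mathrm{bulk}}[\phi](v)+\delta S_{\partial}[\varphi](\C v)=0\).

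\textbf{Step 2.} Expand the boundary piece with \eqref{eq:boundary_variation}, taking \(w=\C v\). Under the standing assumption the corner term \(\Theta_{\partial\B}\) vanishes or is cancelled.

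\textbf{Step 3.} Expand the bulk piece with \eqref{eq:bulk_variation}. This is the delicate point. That formula carries a boundary flux paired with \(\gamma v\), whereas \eqref{eq:regulated_EL_balance} pairs \(\Pi_L\) with \(\C v\). To reconcile them I will work in the unfolded representation \eqref{eq:total_first_variation_product}, which is the convention of Section~\ref{sec:action_exchange}. There the first variation on \(\VV_{\mathrm{prod}}\) is \(\J_{\mathrm{bulk}}(v)+\J_\partial(w)\), with the flux \(\Pi_L\) assigned to the boundary channel \(w\). Restricting to the admissible pair \((v,\C v)\) gives \(\mathfrak F_\phi(v)\) as in \eqref{eq:projected_variational_functional}. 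Stationarity on \(\operatorname{Graph}(\C)\) is by definition \(\mathfrak F_\phi=0\), and writing \(\J_{\mathrm{bulk}}\) and \(\J_\partial\) out via \eqref{eq:Jbulk_def}--\eqref{eq:Jboundary_def} yields \eqref{eq:regulated_EL_balance}. I will state explicitly that the graph is taken in this unfolded sense, so the boundary trace direction of the variation is the one transmitted by \(\C\). I expect this identification of the trace channel with \(\C v\) to be the only step needing care.

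\textbf{Step 4.} For the equivalence, assume \(B_\partial(\phi)=\Pi_L(\phi)+E_{\partial\B}(\gamma\phi)\in\VV_\partial'\), and regard \(E_L(\phi)\) as an element of \(\VV_{\mathrm{bulk}}'\) through the \(L^2\) pairing. Since \(\C\) is bounded, \(\dual{B_\partial}{\C v}_\partial=\dual{\C^*B_\partial}{v}_{\mathrm{bulk}}\). Then \eqref{eq:regulated_EL_balance} reads \(\dual{E_L(\phi)+\C^*B_\partial(\phi)}{v}=0\) for all \(v\in\VV_{\mathrm{bulk}}\), which is \eqref{eq:regulated_EL_adjoint_balance}. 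Each step is reversible, so the two forms are equivalent.

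Finally, I will note the contrast with Theorem~\ref{thm:separable_open_EL}. By Lemma~\ref{lem:failure_separate_cancellation}, no separate equations \(E_L=0\) or \(B_\partial=0\) can be extracted here, because the test directions \((v,0)\) and \((0,w)\) are not available on the graph.
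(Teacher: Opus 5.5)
Your proposal is correct and is essentially the paper's proof: substitute $w=\C v$ into the unfolded first variation \eqref{eq:total_first_variation_product}, then transpose the boundary pairing through the bounded adjoint $\C^*$. You are right that \eqref{eq:bulk_variation} pairs $\Pi_L$ with $\gamma v$ rather than $\C v$; the paper resolves this as you do, tacitly working in the unfolded representation, so your Steps 1--2 based on \eqref{eq:regulated_open_stationarity} can be dropped in favour of Step 3.
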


\begin{proof}
On the graph, \(w=\C v\). Substitution into \eqref{eq:total_first_variation_product} gives \eqref{eq:regulated_EL_balance}. Since \(\C\) is bounded, \(\C^*:\VV_{\partial}'\to\VV_{\mathrm{bulk}}'\) is well defined, and the boundary term is the dual pairing of \(\C^*[\Pi_L+E_{\partial\B}]\) with \(v\). This gives \eqref{eq:regulated_EL_adjoint_balance}.
\end{proof}

\begin{corollary}[Exchange form]\label{cor:exchange_EL}
In a regulated open system,
\begin{equation}
E_L(\phi)
=
-
\C^*
\left[
\Pi_L(\phi)+E_{\partial\B}(\gamma\phi)
\right]
\quad
\text{in }\VV_{\mathrm{bulk}}'.
\label{eq:EL_exchange_form}
\end{equation}
Thus the Euler--Lagrange residual is not required to vanish separately; it is balanced by the boundary variational flux pulled back through the admissibility operator.
\end{corollary}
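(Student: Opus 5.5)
The corollary is a direct rearrangement of Theorem~\ref{thm:regulated_open_EL}, so I will derive it from that result with only one interpretive point to settle. I start from the adjoint balance \eqref{eq:regulated_EL_adjoint_balance}, which holds for any configuration stationary on $\operatorname{Graph}(\C)$. In it, $E_L(\phi)$ is read as the element of $\VV_{\mathrm{bulk}}'$ given by $v\mapsto\int_{\M}\dual{E_L(\phi)}{v}\,\dd V$, that is, as $\J_{\mathrm{bulk}}[\phi]$ from \eqref{eq:Jbulk_def}. The bracket is the boundary flux $B_{\partial}(\phi)\in\VV_{\partial}'$ of \eqref{eq:boundary_force_functional}. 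Both sides of \eqref{eq:EL_exchange_form} are therefore elements of $\VV_{\mathrm{bulk}}'$. Since $\VV_{\mathrm{bulk}}'$ is a vector space, subtracting $\C^*B_{\partial}(\phi)$ from both sides of \eqref{eq:regulated_EL_adjoint_balance} gives \eqref{eq:EL_exchange_form} immediately.

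For completeness I would also verify the identity pointwise in $v$. By \eqref{eq:adjoint_pullback_boundary},
\[
\dual{\C^*B_{\partial}(\phi)}{v}_{\mathrm{bulk}}=\J_{\partial}[\phi](\C v),
\]
so \eqref{eq:EL_exchange_form} tested on $v$ is exactly $\J_{\mathrm{bulk}}[\phi](v)=-\J_{\partial}[\phi](\C v)$. This is \eqref{eq:regulated_EL_balance} rearranged. Boundedness of $\C$ is what guarantees that $\C^*$ is defined on $\VV_{\partial}'$, and this is the only functional-analytic input required.

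The interpretive sentence, that the residual need not vanish separately, is not proved by the identity itself; I would cite Lemma~\ref{lem:failure_separate_cancellation} and Corollary~\ref{cor:nontrivial_exchange} for it. Those results show that whenever $B_{\partial}\neq0$ and $\C^*B_{\partial}\neq0$, equation \eqref{eq:EL_exchange_form} forces $E_L\neq0$ in $\VV_{\mathrm{bulk}}'$ while stationarity on the graph still holds. So the Euler--Lagrange residual is balanced by the pulled-back flux rather than annihilated.

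The only delicate step is the identification of $E_L(\phi)$ with a functional. I assume the corner terms in \eqref{eq:boundary_variation} vanish, as in the standing hypothesis, so that $\J_{\partial}$ is exactly the pairing with $B_{\partial}$. I also assume $E_L(\phi)$ is regular enough to define a bounded functional on $\VV_{\mathrm{bulk}}$. Both assumptions are already built into the statement of Theorem~\ref{thm:regulated_open_EL}, so no further work is expected.
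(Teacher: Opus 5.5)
Your proposal is correct and follows the paper's approach. The paper gives no separate proof and treats the corollary as an immediate rearrangement of the adjoint balance \eqref{eq:regulated_EL_adjoint_balance} in Theorem~\ref{thm:regulated_open_EL}, with $E_L(\phi)$ read as the functional $\J_{\mathrm{bulk}}[\phi]\in\VV_{\mathrm{bulk}}'$ and boundedness of $\C$ supplying $\C^*$.

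Your remark that the residual is nonzero exactly when $\C^*B_{\partial}\neq0$ is a slightly sharper reading of the interpretive sentence than Corollary~\ref{cor:nontrivial_exchange} gives, and it is consistent with it.
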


\begin{remark}[Limits]\label{rem:closed_natural_exchange_limits}
If \(\C=0\), the fixed-boundary problem is recovered. If boundary directions are independently testable, the regulated balance separates into \eqref{eq:bulk_EL}--\eqref{eq:open_boundary_condition}. For a nontrivial graph, stationarity remains the projected exchange condition \eqref{eq:EL_exchange_form}.
\end{remark}

\section{Open Hamilton--Jacobi representation}\label{sec:open_HJ}

The open variational principle also admits a Hamilton--Jacobi representation. No additional dynamical assumption is introduced; this section rewrites the first-variation balance of Section~\ref{sec:open_EL} at the level of Hamilton's principal function.

Consider the finite-dimensional action
\begin{equation}
S[q]
=
\int_{t_0}^{t_1}L(q,\dot q,t)\,\dd t
+
B(q(t_1),t_1),
\label{eq:HJ_open_action}
\end{equation}
with fixed initial endpoint. Let \(\HS(q,t)\) be Hamilton's principal function. In the interior it satisfies
\begin{equation}
\partial_t\HS(q,t)
+
H(q,\nabla_q\HS,t)
=
0.
\label{eq:HJ_bulk}
\end{equation}
For freely testable terminal variations, the natural endpoint condition is
\begin{equation}
D_q\HS(q,t_1)
+
D_q B(q,t_1)
=
0.
\label{eq:HJ_open_boundary}
\end{equation}
This is the Hamilton--Jacobi counterpart of the separable boundary condition
\eqref{eq:open_boundary_condition}.

For a regulated endpoint class, terminal variations are restricted by
\begin{equation}
\delta q(t_1)=\C v.
\label{eq:HJ_endpoint_compatibility}
\end{equation}
The endpoint stationarity condition is
\begin{equation}
\dual{D_q\HS+D_qB}{\C v}
=
0
\quad
\text{for all admissible }v,
\label{eq:HJ_regulated_balance}
\end{equation}
or, equivalently,
\begin{equation}
\C^*
\left(
D_q\HS+D_qB
\right)
=
0
\quad
\text{on the admissible endpoint space}.
\label{eq:HJ_regulated_adjoint_balance}
\end{equation}
Thus the regulated Hamilton--Jacobi condition is the pullback analogue of
\eqref{eq:regulated_EL_adjoint_balance}.

In field form, let \(\HS[\Sigma,\varphi]\) be the on-shell action evaluated on a boundary hypersurface \(\Sigma\) with boundary field \(\varphi=\gamma\phi\). Its first variation with respect to the boundary datum is denoted
\[
D_{\varphi}\HS[\Sigma,\varphi]\in\VV_{\partial}'.
\]
The separable open Hamilton--Jacobi boundary condition is
\begin{equation}
D_{\varphi}\HS[\Sigma,\varphi]
+
D_{\varphi}S_{\partial}[\varphi]
=
0
\quad\text{in }\VV_{\partial}'.
\label{eq:functional_HJ_open_boundary}
\end{equation}
For a regulated admissible graph \(w=\C v\), this becomes
\begin{equation}
\dual{
D_{\varphi}\HS[\Sigma,\varphi]
+
D_{\varphi}S_{\partial}[\varphi]
}{\C v}
=
0
\quad
\forall v\in\VV_{\mathrm{bulk}},
\label{eq:functional_HJ_regulated_boundary}
\end{equation}
or
\begin{equation}
\C^*
\left(
D_{\varphi}\HS[\Sigma,\varphi]
+
D_{\varphi}S_{\partial}[\varphi]
\right)
=
0
\quad
\text{in }\VV_{\mathrm{bulk}}'.
\label{eq:functional_HJ_adjoint_boundary}
\end{equation}
Equation~\eqref{eq:functional_HJ_adjoint_boundary} is the Hamilton--Jacobi form of regulated action exchange.

\section{Open variation of the boundary geometry}\label{sec:boundary_geometry}

For geometric variational problems, openness may also act through the induced boundary metric \(h_{ab}\). The boundary part of the first variation contains
\begin{equation}
\delta S
=
\cdots+
\frac12
\int_{\bM}
\sqrt{|h|}\,
\mathcal T^{ab}\delta h_{ab}
\,\dd^{d-1}x,
\label{eq:metric_variation}
\end{equation}
where
\begin{equation}
\mathcal T^{ab}
:=
\frac{2}{\sqrt{|h|}}
\frac{\delta S}{\delta h_{ab}}
\label{eq:boundary_stress_def}
\end{equation}
is the boundary variational stress. If \(S=S_G+S_{\partial}\), then
\begin{equation}
\mathcal T^{ab}
=
\mathcal T_G^{ab}
+
\mathcal T_{\partial}^{ab}.
\label{eq:stress_split}
\end{equation}

\begin{proposition}[Projected geometric boundary balance]\label{prop:projected_geometric_balance}
If boundary metric variations are independently testable, stationarity gives the pointwise balance
\begin{equation}
\mathcal T_G^{ab}
+
\mathcal T_{\partial}^{ab}
=
0
\quad
\text{on }\bM.
\label{eq:stress_balance_separable}
\end{equation}
If, instead, admissible boundary metric variations are induced by bulk variations through a bounded operator
\begin{equation}
\C_h:\VV_{\mathrm{bulk}}\to\mathscr S_{\partial},
\qquad
\delta h_{ab}=(\C_h v)_{ab},
\label{eq:metric_compatibility_operator}
\end{equation}
then stationarity is the projected balance
\begin{equation}
\J_{\mathrm{bulk}}^{\mathrm{geom}}[v]
+
\frac12
\int_{\bM}
\sqrt{|h|}
\left(
\mathcal T_G^{ab}
+
\mathcal T_{\partial}^{ab}
\right)
(\C_h v)_{ab}
\,\dd^{d-1}x
=
0
\quad
\forall v\in\VV_{\mathrm{bulk}}.
\label{eq:regulated_stress_balance}
\end{equation}
Equivalently,
\begin{equation}
\J_{\mathrm{bulk}}^{\mathrm{geom}}
+
\frac12
\C_h^*
\left[
\sqrt{|h|}
\left(
\mathcal T_G+\mathcal T_{\partial}
\right)
\right]
=
0
\quad
\text{in }\VV_{\mathrm{bulk}}'.
\label{eq:regulated_stress_adjoint_balance}
\end{equation}
\end{proposition}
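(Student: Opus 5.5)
The plan is to follow the structure of Theorems~\ref{thm:separable_open_EL} and \ref{thm:regulated_open_EL}, with the boundary field \(\varphi\) replaced by the induced boundary metric \(h_{ab}\). I first write the total first variation in unfolded form on the product of bulk variations and boundary metric variations \(\delta h_{ab}\in\mathscr S_{\partial}\). By \eqref{eq:metric_variation} and the additive split \eqref{eq:stress_split}, which holds because \eqref{eq:boundary_stress_def} is linear in \(S\), this gives
\[
\delta S(v,\delta h)=\J_{\mathrm{bulk}}^{\mathrm{geom}}[v]+\frac12\int_{\bM}\sqrt{|h|}\,\bigl(\mathcal T_G^{ab}+\mathcal T_{\partial}^{ab}\bigr)\delta h_{ab}\,\dd^{d-1}x .
\]
Here \(\J_{\mathrm{bulk}}^{\mathrm{geom}}\) collects the interior Euler--Lagrange contribution. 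As in Section~\ref{sec:open_EL}, corner terms are assumed to vanish or to be cancelled.

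In the separable case I test with pairs \((0,\delta h)\), where \(\delta h\) is an arbitrary smooth symmetric tensor compactly supported in the interior of \(\bM\). These are admissible by independent testability. Since \(\sqrt{|h|}>0\), the fundamental lemma applied to symmetric \(2\)-tensors gives the pointwise balance \eqref{eq:stress_balance_separable}. The stress is symmetric because \(h_{ab}\) is symmetric, so the pairing loses no information.

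In the regulated case, the admissible pairs are \((v,\C_h v)\). Substituting \(\delta h_{ab}=(\C_h v)_{ab}\) yields \eqref{eq:regulated_stress_balance} directly, exactly as in the proof of Theorem~\ref{thm:regulated_open_EL}. For \eqref{eq:regulated_stress_adjoint_balance}, the boundedness of \(\C_h\) gives a well-defined adjoint \(\C_h^*:\mathscr S_\partial'\to\VV_{\mathrm{bulk}}'\). I then identify the boundary integral as \(\frac12\langle \C_h^*[\sqrt{|h|}(\mathcal T_G+\mathcal T_\partial)],v\rangle\), in the same way as \eqref{eq:adjoint_pullback_boundary}.

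The main obstacle is this last identification. I must check that \(\sqrt{|h|}(\mathcal T_G+\mathcal T_\partial)\) actually defines an element of \(\mathscr S_\partial'\), meaning the pairing with \(\delta h\) is continuous in the topology of \(\mathscr S_\partial\). I plan to assume enough regularity of the stationary configuration, for instance stress in \(L^2\) or in the dual Sobolev space matching \(\mathscr S_\partial\). I will state this explicitly as part of the functional setting.
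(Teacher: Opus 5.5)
Your proposal is correct and follows the paper's proof. The paper tests \eqref{eq:metric_variation} with arbitrary \(\delta h_{ab}\) in the separable case, substitutes \(\delta h_{ab}=(\C_h v)_{ab}\) in the regulated case, and uses boundedness of \(\C_h\) to obtain the adjoint form; your added care (symmetric compactly supported test tensors, and requiring that \(\sqrt{|h|}(\mathcal T_G+\mathcal T_\partial)\) define an element of \(\mathscr S_\partial'\)) only makes explicit hypotheses the paper leaves implicit.
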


\begin{proof}
The separable case follows by testing \eqref{eq:metric_variation} with arbitrary \(\delta h_{ab}\). In the regulated case, substitute \(\delta h_{ab}=(\C_h v)_{ab}\). Boundedness of \(\C_h\) gives the adjoint map \(\C_h^*:\mathscr S_{\partial}'\to\VV_{\mathrm{bulk}}'\), yielding \eqref{eq:regulated_stress_adjoint_balance}.
\end{proof}

Thus boundary geometry obeys the same structure as the field variation: pointwise balance is recovered only when boundary metric variations are independently testable; otherwise the stress balance is pulled back to the admissible bulk deformation space.

\section{The open Hessian as a quadratic form}\label{sec:open_hessian}

Let \(\phi_0\) be a stationary configuration of the open action, and let
\[
\VV_{\mathrm{op}}
=
\operatorname{Graph}(\C)
\]
be the regulated admissible variation space, after imposing the bulk--boundary compatibility relation and quotienting null directions. We regard \(\VV_{\mathrm{op}}\) as a Hilbert graph space, with the functional-analytic structure inherited from \(\VV_{\mathrm{bulk}}\oplus\VV_{\partial}\) \cite{Conway1990,Brezis2011}. The second variation restricted to this graph defines
\begin{equation}
\delta^2S[\phi_0]\bigl((u,\C u),(u,\C u)\bigr)
=
\frac12\Q_{\mathrm{op}}[u],
\qquad
(u,\C u)\in\VV_{\mathrm{op}}.
\label{eq:second_variation_graph_form}
\end{equation}
Thus \(\Q_{\mathrm{op}}\) is the pullback of the full Hessian to the admissible exchange space.

In general,
\begin{equation}
\Q_{\mathrm{op}}[u]
=
\Q_{\mathrm{bulk}}[u]
+
\Q_{\partial}[\C u]
+
2\Q_{\mathrm{mix}}[u,\C u],
\label{eq:general_open_quadratic_form}
\end{equation}
where \(\Q_{\mathrm{mix}}\) collects mixed second-variation terms. For an additively split action with fixed compatibility operator, this reduces to
\begin{equation}
\Q_{\mathrm{op}}[u]
=
\Q_{\mathrm{bulk}}[u]
+
\Q_{\partial}[\C u].
\label{eq:additive_open_quadratic_form}
\end{equation}

The pressure-like case considered below has the form
\begin{equation}
\Q_{\Pi}[u]
=
\Q_G[u]
-
\Pi\Q_P^{\partial}[\C u],
\label{eq:pressure_open_quadratic_form}
\end{equation}
where \(\Q_G\) is the stabilizing bulk or geometric form and \(\Q_P^{\partial}\) is a non-negative boundary-pressure form. For compactness we write
\begin{equation}
\Q_P[u]
:=
\Q_P^{\partial}[\C u],
\qquad
\Q_\Pi[u]
=
\Q_G[u]-\Pi\Q_P[u],
\label{eq:open_quadratic_form}
\end{equation}
always understood on \(\VV_{\mathrm{op}}\).

\begin{assumption}[Quadratic-form setting]\label{ass:quadratic_forms}
The following hypotheses are assumed in Sections~\ref{sec:open_hessian}--\ref{sec:spectral_theorem}.
\begin{enumerate}
\item \(\VV_{\mathrm{op}}\) is a dense subspace of a Hilbert space \(\HH\), or a closed admissible graph subspace after quotienting null modes.
\item \(\Q_G:\VV_{\mathrm{op}}\to\mathbb R\) is symmetric, closed and bounded from below.
\item \(\Q_G\) is coercive on \(\VV_{\mathrm{op}}\): there exists \(c_G>0\) such that
\begin{equation}
\Q_G[u]\geq c_G\|u\|_{\VV_{\mathrm{op}}}^{2}.
\label{eq:QG_coercive}
\end{equation}
\item \(\Q_P:\VV_{\mathrm{op}}\to\mathbb R\) is symmetric, non-negative, and either \(\Q_G\)-bounded with relative bound strictly smaller than one or compact with respect to the \(\Q_G\)-form norm.
\item There exists \(u\in\VV_{\mathrm{op}}\) such that \(\Q_P[u]>0\).
\end{enumerate}
\end{assumption}

Under Assumption~\ref{ass:quadratic_forms}, \(\Q_\Pi\) is a closed semibounded form in the range of \(\Pi\) considered. By the standard representation theorem for closed semibounded quadratic forms, it defines a unique self-adjoint operator \(\Leff\) \cite{Kato1995,ReedSimon1978}:
\begin{equation}
\Q_\Pi[u,v]
=
\dual{u}{\Leff v}_{\HH},
\qquad
v\in\Dom(\Leff)\subset\VV_{\mathrm{op}},
\quad
u\in\VV_{\mathrm{op}}.
\label{eq:hessian_representation}
\end{equation}
When the forms are represented by operators, this is written formally as
\begin{equation}
\Leff
=
\LG-\Pi\LP.
\label{eq:open_hessian}
\end{equation}
This expression is an operator representation of the pulled-back graph form \eqref{eq:open_quadratic_form}, not an unconstrained subtraction of independent bulk and boundary Hessians.

\begin{definition}[Open Hessian]\label{def:open_hessian}
The open Hessian at a stationary configuration is the self-adjoint operator associated with the closed quadratic form \(\Q_\Pi\) obtained by restricting the second variation of the total action to \(\VV_{\mathrm{op}}\). Positivity and coercivity are defined on this admissible exchange space.
\end{definition}

The spectrum of \(\Leff\) is therefore the spectrum of the Hessian projected onto admissible exchange directions. Modes outside the graph of \(\C\) are not tested by the regulated variational problem.

\section{Spectral instability theorem}\label{sec:spectral_theorem}

We now state the instability criterion on the regulated admissible graph. Throughout this section,
\[
\Q_P[u]\equiv \Q_P^{\partial}[\C u],
\qquad
u\in\VV_{\mathrm{op}}.
\]
Thus all Rayleigh quotients are computed only over admissible exchange directions.

Define the pressure-active cone
\begin{equation}
\VV_P
:=
\{u\in\VV_{\mathrm{op}}:\Q_P[u]>0\},
\label{eq:pressure_active_cone}
\end{equation}
and, for \(u\in\VV_P\),
\begin{equation}
\R[u]
=
\frac{\Q_G[u]}{\Q_P[u]}.
\label{eq:rayleigh_quotient}
\end{equation}
The critical pressure is
\begin{equation}
\Pi_c
:=
\inf_{u\in\VV_P}\R[u]
=
\inf_{u:\,\Q_P^{\partial}[\C u]>0}
\frac{\Q_G[u]}{\Q_P^{\partial}[\C u]}.
\label{eq:critical_pressure}
\end{equation}

\begin{theorem}[Regulated boundary-pressure loss of positivity]
\label{thm:boundary_pressure_instability}

Assume Assumption~\ref{ass:quadratic_forms}. Let
\[
\Q_\Pi=\Q_G-\Pi\Q_P
\]
be defined on \(\VV_{\mathrm{op}}\), and let \(\Leff\) be the self-adjoint operator associated with \(\Q_\Pi\). Then:
\begin{align}
\Pi<\Pi_c
&\Rightarrow
\Q_\Pi[u]>0
\quad
\forall u\neq0,
\label{eq:positive_regime}\\
\Pi=\Pi_c
&\Rightarrow
\inf_{u\in\VV_P}
\frac{\Q_\Pi[u]}{\Q_P[u]}
=
0,
\label{eq:marginal_regime}\\
\Pi>\Pi_c
&\Rightarrow
\exists u_*\in\VV_{\mathrm{op}}
\text{ such that }
\Q_\Pi[u_*]<0.
\label{eq:negative_regime}
\end{align}
Consequently, for \(\Pi>\Pi_c\), the stationary configuration is not a strict local minimum relative to the regulated exchange space. If, in addition, the embedding of the \(\Q_G\)-form domain into \(\HH\) is compact, or more generally if
\[
\inf\spec(\Leff)
\]
is attained as an eigenvalue, then \(\Leff\) has a negative eigenvalue on \(\VV_{\mathrm{op}}\) \cite{Kato1995,ReedSimon1978,Davies1995}.
\end{theorem}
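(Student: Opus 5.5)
The plan is to work directly with the Rayleigh quotient \eqref{eq:rayleigh_quotient} on the pressure-active cone \(\VV_P\). Everything needed at the form level follows from the identity
\begin{equation*}
\Q_\Pi[u]=\Q_P[u]\bigl(\R[u]-\Pi\bigr),\qquad u\in\VV_P .
\end{equation*}
The first step is to record two facts. By coercivity \eqref{eq:QG_coercive}, \(\Q_G[u]\ge c_G\|u\|^2_{\VV_{\mathrm{op}}}>0\) for every \(u\neq0\), so \(\R>0\) on \(\VV_P\) and hence \(\Pi_c\ge0\). By item 5 of Assumption~\ref{ass:quadratic_forms}, \(\VV_P\neq\emptyset\), so \(\Pi_c<\infty\). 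The relative-boundedness or compactness hypothesis in item 4 gives \(\Q_P\le a\,\Q_G+b\|\cdot\|^2\); combined with coercivity this yields \(\Q_P\le C\,\Q_G\), and therefore in fact \(\Pi_c\ge 1/C>0\).

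Next I would split into the three regimes.

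\emph{Case \(\Pi<\Pi_c\).} Take \(u\neq0\). If \(\Q_P[u]=0\), then \(\Q_\Pi[u]=\Q_G[u]>0\) by coercivity. If \(u\in\VV_P\), then \(\R[u]\ge\Pi_c>\Pi\). When \(\Pi\ge0\) the identity above gives \(\Q_\Pi[u]>0\). When \(\Pi<0\), both terms \(\Q_G[u]\) and \(-\Pi\Q_P[u]\) are nonnegative and the first is strictly positive. This gives \eqref{eq:positive_regime}.

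\emph{Case \(\Pi=\Pi_c\).} On \(\VV_P\) we have \(\Q_{\Pi_c}[u]/\Q_P[u]=\R[u]-\Pi_c\). Taking the infimum over \(\VV_P\) gives exactly \(\inf\R-\Pi_c=0\), which is \eqref{eq:marginal_regime}.

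\emph{Case \(\Pi>\Pi_c\).} By the definition of the infimum, there is \(u_*\in\VV_P\) with \(\R[u_*]<\Pi\). The identity then gives \(\Q_\Pi[u_*]<0\), which is \eqref{eq:negative_regime}. For the consequence about minimality I would use the second-order expansion \eqref{eq:second_variation_graph_form} along \(t(u_*,\C u_*)\). The first variation vanishes by regulated stationarity (Theorem~\ref{thm:regulated_open_EL}), so \(S\) decreases to second order in \(t\). Hence \(\phi_0\) is not a strict local minimum relative to \(\VV_{\mathrm{op}}\). This step implicitly assumes the action is \(C^2\) along graph directions, and I would state that as a standing regularity assumption.

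The spectral statement is where I expect the main obstacle. I would first check that \(\Q_\Pi\) is closed and semibounded, so that \(\Leff\) exists via \eqref{eq:hessian_representation}. In the relative-bound case this follows from the KLMN theorem, but only for \(\Pi\) small enough that \(\Pi a<1\). For larger \(\Pi\) one needs the compact alternative of item 4: a form that is compact relative to \(\Q_G\) is infinitesimally form-bounded, so closedness holds for all \(\Pi\). I would present the argument under that alternative, and note that the theorem's phrase ``in the range of \(\Pi\) considered'' covers the other case.

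Once \(\Leff\) exists, the variational characterisation gives
\begin{equation*}
\inf\spec(\Leff)=\inf_{u\neq0}\Q_\Pi[u]/\|u\|_{\HH}^2 .
\end{equation*}
The vector \(u_*\) makes this infimum strictly negative. If the form domain embeds compactly in \(\HH\), then \(\Leff\) has compact resolvent and purely discrete spectrum, so the bottom of the spectrum is an eigenvalue and is negative. In the more general case, the hypothesis that \(\inf\spec(\Leff)\) is attained gives the same conclusion directly. Finally, the minimiser lies in \(\Dom(\Leff)\subset\VV_{\mathrm{op}}\), so the eigenvector is an admissible graph direction.
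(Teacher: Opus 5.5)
Your proposal is correct and follows essentially the same route as the paper. Both arguments use the factorization \(\Q_\Pi[u]=\Q_P[u](\R[u]-\Pi)\) on \(\VV_P\), coercivity off \(\VV_P\), the definition of the infimum for \(\Pi>\Pi_c\), and the min--max characterization of \(\inf\spec(\Leff)\). Your extra checks fill in steps the paper only asserts: finiteness of \(\Pi_c\), closedness of \(\Q_\Pi\) for large \(\Pi\) via the form-compact alternative, and the second-order expansion behind non-minimality. The separate treatment of \(\Pi<0\) is unnecessary, since the factorization already gives positivity whenever \(\Pi<\R[u]\), whatever the sign of \(\Pi\).
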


\begin{proof}
For any \(u\in\VV_{\mathrm{op}}\),
\begin{equation}
\Q_\Pi[u]
=
\Q_G[u]
-
\Pi\Q_P[u].
\label{eq:QP_identity}
\end{equation}
If \(\Q_P[u]=0\), coercivity gives
\[
\Q_\Pi[u]=\Q_G[u]>0
\qquad
(u\neq0).
\]
If \(\Q_P[u]>0\), then
\begin{equation}
\Q_\Pi[u]
=
\Q_P[u]\bigl(\R[u]-\Pi\bigr).
\label{eq:rayleigh_factorization}
\end{equation}
For \(\Pi<\Pi_c\), one has \(\R[u]\geq\Pi_c>\Pi\), hence \(\Q_\Pi[u]>0\). At \(\Pi=\Pi_c\), the normalized infimum vanishes. For \(\Pi>\Pi_c\), the definition of the infimum yields \(u_*\in\VV_P\) such that
\[
\R[u_*]<\Pi,
\]
and therefore
\[
\Q_\Pi[u_*]<0.
\]
This proves loss of positivity of the second variation on \(\VV_{\mathrm{op}}\). If \(\inf\spec(\Leff)\) is attained as an eigenvalue, the min--max principle implies that this eigenvalue is negative.
\end{proof}

\begin{corollary}[Loss of coercivity on the exchange space]
\label{cor:coercivity}

Under the hypotheses of Theorem~\ref{thm:boundary_pressure_instability}, if \(\Pi>\Pi_c\), the quadratic form \(\Q_\Pi\) is not coercive on \(\VV_{\mathrm{op}}\). If \(\Pi\nearrow\Pi_c\) from below along a compact minimizing branch, the coercivity constant tends to zero.
\end{corollary}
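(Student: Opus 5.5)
The plan is to prove the two claims separately, using the negative direction \(u_*\) from \eqref{eq:negative_regime} for the first and a direct lower bound on \(\Q_\Pi\) for the second. Coercivity here means the existence of \(c>0\) with \(\Q_\Pi[u]\geq c\|u\|_{\VV_{\mathrm{op}}}^2\) for all \(u\in\VV_{\mathrm{op}}\).

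\emph{Failure of coercivity for \(\Pi>\Pi_c\).} Theorem~\ref{thm:boundary_pressure_instability} gives \(u_*\in\VV_{\mathrm{op}}\) with \(\Q_\Pi[u_*]<0\), and \(u_*\neq0\) since \(\Q_\Pi[0]=0\). If \(\Q_\Pi\) were coercive with constant \(c>0\), we would get \(0>\Q_\Pi[u_*]\geq c\|u_*\|^2>0\), a contradiction. This settles the first claim. It also shows that \(\Q_\Pi\) is unbounded below along the ray \(tu_*\).

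\emph{Vanishing of the coercivity constant as \(\Pi\nearrow\Pi_c\).} Define the optimal constant
\[
c(\Pi):=\inf_{\|u\|_{\VV_{\mathrm{op}}}=1}\Q_\Pi[u].
\]
\begin{itemize}
\item \emph{Positivity below \(\Pi_c\).} For \(\Pi<\Pi_c\) and \(\|u\|=1\) we have \(\Q_P[u]\leq \Q_G[u]/\Pi_c\). Hence \(\Q_\Pi[u]\geq(1-\Pi/\Pi_c)\Q_G[u]\geq(1-\Pi/\Pi_c)c_G>0\), so \(c(\Pi)>0\). (If \(\Pi\leq0\) the bound is trivial.)
\item \emph{Upper bound along the minimizing branch.} Let \(\Q_P\) be bounded on the unit sphere, which holds because it is \(\Q_G\)-bounded or compact. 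Then \(c(\Pi)\) is an infimum of affine functions of \(\Pi\), hence concave and finite. Take normalized \(u_n\in\VV_P\) with \(\R[u_n]\to\Pi_c\). Then
\[
\Q_\Pi[u_n]=\Q_P[u_n]\bigl(\R[u_n]-\Pi\bigr)\to 0
\]
as \(n\to\infty\) and \(\Pi\to\Pi_c\), provided \(\Q_P[u_n]\) stays bounded. This gives \(\limsup c(\Pi)\leq0\), and together with \(c(\Pi)>0\) we obtain \(c(\Pi)\to0^+\).
\item \emph{Where compactness enters.} The compactness hypothesis ensures the infimum is attained by some \(u_c\). By weak compactness of the unit ball, compactness of \(\Q_P\) relative to the \(\Q_G\)-norm, and weak lower semicontinuity of \(\Q_G\), a minimizing sequence for \(\R\) has a weak limit \(u_c\) with \(\Q_P[u_c]>0\) and \(\R[u_c]=\Pi_c\). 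Then \(c(\Pi)\leq \Q_\Pi[u_c]/\|u_c\|^2=(\Pi_c-\Pi)\Q_P[u_c]/\|u_c\|^2\to0\).
\end{itemize}

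The main obstacle is this last step: showing the weak limit does not lose \(\Q_P\)-mass, i.e.\ that \(\Q_P[u_c]>0\). Normalize the minimizing sequence by \(\Q_P[u_n]=1\). Then \(\Q_G[u_n]\to\Pi_c\), so by coercivity the sequence is bounded in \(\VV_{\mathrm{op}}\). Compactness of \(\Q_P\) then passes \(\Q_P[u_n]=1\) to the limit, giving \(\Q_P[u_c]=1\). The merely relatively bounded alternative in Assumption~\ref{ass:quadratic_forms} does not guarantee attainment. In that case I would rely only on the concavity and continuity argument, which still yields \(c(\Pi)\to0\) whenever \(\Pi_c\) is approached by normalized directions with bounded \(\Q_P\).
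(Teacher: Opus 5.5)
Your proposal is correct and follows the route the paper leaves implicit: the corollary is stated there without a separate proof, as a direct consequence of the negative direction in \eqref{eq:negative_regime} and the factorization \eqref{eq:rayleigh_factorization} along a minimizing branch, and your minimizing-sequence bound further shows that compactness is needed only for attainment of \(\Pi_c\), not for \(c(\Pi)\to0\). The one step you assert too quickly is that \(\Q_P\) is bounded on the \(\VV_{\mathrm{op}}\)-unit sphere: Assumption~\ref{ass:quadratic_forms} bounds \(\Q_G\) only from below, so you need the \(\Q_G\)-form norm to be dominated by \(\|\cdot\|_{\VV_{\mathrm{op}}}\), which follows from coercivity plus completeness of \(\VV_{\mathrm{op}}\) in both norms via the bounded inverse theorem; this same bound gives \(\Pi_c\geq c_G/\sup_{\|u\|=1}\Q_P[u]>0\), which is the positivity you use when dividing by \(\Pi_c\).
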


\begin{remark}[Regulation of the instability channel]
\label{rem:regulated_instability_channel}

The threshold \(\Pi_c\) is not computed on the unconstrained boundary space. It is computed only over boundary perturbations reachable as \(\C u\). Thus \(\C\) regulates both the first-order exchange channel and the second-order instability channel.
\end{remark}

\begin{remark}[Negative direction versus negative eigenvalue]
\label{rem:negative_direction}

Equation~\eqref{eq:negative_regime} proves loss of positivity of the second variation on the admissible graph. The stronger statement that \(\Leff\) possesses a negative eigenvalue requires additional spectral hypotheses ensuring that \(\inf\spec(\Leff)\) is attained as an eigenvalue.
\end{remark}

\section{Minimal spherical example}\label{sec:spherical_example}

We give a minimal example illustrating how the compatibility operator regulates the spectral instability channel. Let \(\bM=S_R^2\), and let
\[
u=\sum_{\ell,m}u_{\ell m}Y_{\ell m}
\]
be a normal deformation. Consider the stabilizing quadratic form
\begin{equation}
\Q_G[u]
=
\sigma_G
\int_{S_R^2}
\left(
|\nabla_\Sigma u|^2
+
\frac{2}{R^2}u^2
\right)
\dd A,
\qquad
\sigma_G>0,
\label{eq:spherical_QG}
\end{equation}
and the boundary pressure form
\begin{equation}
\Q_P^{\partial}[w]
=
\frac{2}{R}
\int_{S_R^2}w^2\,\dd A.
\label{eq:spherical_boundary_QP}
\end{equation}

Let the compatibility operator act diagonally on spherical harmonics,
\begin{equation}
\C Y_{\ell m}
=
c_\ell Y_{\ell m}.
\label{eq:spherical_compatibility_operator}
\end{equation}
Then the admissible boundary displacement is
\[
w=\C u
=
\sum_{\ell,m}c_\ell u_{\ell m}Y_{\ell m},
\]
and the pulled-back pressure form is
\begin{equation}
\Q_P[u]
=
\Q_P^{\partial}[\C u]
=
\frac{2}{R}
\sum_{\ell,m}
|c_\ell|^2|u_{\ell m}|^2.
\label{eq:spherical_pulled_pressure_form}
\end{equation}

Using
\[
-\Delta_\Sigma Y_{\ell m}
=
\frac{\ell(\ell+1)}{R^2}Y_{\ell m},
\]
the regulated open quadratic form diagonalizes as
\begin{equation}
\Q_\Pi[u]
=
\sum_{\ell,m}
\left[
\sigma_G
\frac{\ell(\ell+1)+2}{R^2}
-
\frac{2\Pi}{R}|c_\ell|^2
\right]
|u_{\ell m}|^2.
\label{eq:spherical_regulated_open_form}
\end{equation}
Thus the regulated eigenvalue shift is
\begin{equation}
\lambda_\ell^{\C}(\Pi)
=
\sigma_G
\frac{\ell(\ell+1)+2}{R^2}
-
\frac{2\Pi}{R}|c_\ell|^2.
\label{eq:spherical_regulated_eigenvalues}
\end{equation}
The mode-dependent threshold is
\begin{equation}
\Pi_{c,\ell}^{\C}
=
\frac{\sigma_G}{2R}
\frac{\ell(\ell+1)+2}{|c_\ell|^2},
\qquad c_\ell\neq0.
\label{eq:spherical_regulated_threshold_mode}
\end{equation}
Modes with \(c_\ell=0\) are not coupled to the boundary-pressure channel. Therefore the regulated critical pressure is
\begin{equation}
\Pi_c^{\C}
=
\inf_{\ell\in\mathcal I_{\mathrm{adm}},\,c_\ell\neq0}
\frac{\sigma_G}{2R}
\frac{\ell(\ell+1)+2}{|c_\ell|^2},
\label{eq:spherical_regulated_threshold}
\end{equation}
where \(\mathcal I_{\mathrm{adm}}\) denotes the harmonics retained after constraints such as volume fixing or quotienting of rigid modes.

This example shows explicitly that \(\C\) does more than restrict admissible variations: it selects which boundary modes can exchange action with the bulk and therefore which spectral channels can lose positivity.

\section{Discussion}\label{sec:discussion}

The construction developed here changes the role of the boundary in variational mechanics. In the closed formulation, the boundary is removed from the admissible variation class by fixing traces or by imposing independently testable natural conditions. In the regulated open formulation, the boundary remains part of the variational system, but only through an admissible graph
\[
\VV_{\mathrm{op}}
=
\operatorname{Graph}(\C).
\]
This is the main structural distinction: openness is neither arbitrary boundary forcing nor unconstrained nonlocal cancellation. It is stationarity after restriction to a specified compatibility channel.

The compatibility operator \(\C\) is therefore the object that regulates bulk--boundary exchange. It determines which boundary variations are coupled to a given bulk variation, and its adjoint pulls boundary variational fluxes back to the bulk test space. The projected balance
\[
E_L
+
\C^*
\left(
\Pi_L+E_{\partial\B}
\right)
=
0
\quad
\text{in }\VV_{\mathrm{bulk}}'
\]
is the weak form of this exchange. It does not modify the Euler--Lagrange operator by an external force; rather, it states that the total first variation vanishes on the admissible graph.

This viewpoint clarifies the relation with classical boundary conditions. Fixed-boundary problems correspond to a trivial boundary channel. Natural boundary conditions correspond to independently testable boundary directions. Regulated open problems lie between these limits: boundary degrees of freedom are active, but only through the operator that defines admissibility. Thus the generalized natural condition is not discarded; it is recovered as the separable limit of graph-space stationarity.

The Hamilton--Jacobi and geometric stress formulations follow the same pattern. In action space, the boundary derivative of the on-shell action is pulled back by \(\C^*\). In geometric problems, boundary stress is either balanced pointwise when metric variations are independent, or pulled back through a metric compatibility operator when they are not. These formulations are not additional assumptions; they are equivalent representations of the same regulated admissible class.

At second order, the decisive object is not the unconstrained Hessian but its restriction to \(\VV_{\mathrm{op}}\). For pressure-like boundary couplings, this produces the pulled-back form
\[
\Q_\Pi[u]
=
\Q_G[u]
-
\Pi\Q_P^{\partial}[\C u].
\]
The Rayleigh threshold is therefore computed only over exchange-compatible directions. This explains why the compatibility operator controls both first-order action exchange and second-order loss of positivity.

The spherical example makes this control explicit. When \(\C\) is diagonal in spherical harmonics, the coefficients \(c_\ell\) select which modes couple to the boundary-pressure sector. Modes outside the range of \(\C\) do not enter the instability quotient; modes with stronger coupling destabilize at lower pressure. The example is not intended as a model of a specific elastic surface, but as a transparent realization of the general graph-space criterion.

The framework remains close to classical variational analysis. It uses standard first variations, natural boundary terms, closed quadratic forms and Rayleigh--Ritz arguments \cite{Courant1953,ReedSimon1978,Evans2010}. The new ingredient is the admissible exchange space. Once that space is specified, the rest of the construction follows by restriction, pullback and spectral analysis on the resulting graph.

\section{Conclusion}\label{sec:conclusion}

Variational openness has been formulated as a conservative extension of closed variational mechanics. The extension leaves the variational calculus unchanged, but changes the admissible class on which stationarity is imposed. Its central object is the regulated graph
\[
\VV_{\mathrm{op}}
=
\operatorname{Graph}(\C),
\]
which encodes the admissible exchange channel between bulk and boundary variations.

On this graph, stationarity is a projected condition rather than a separate annihilation of the bulk and boundary variations. The first variation closes through the pullback balance
\[
E_L
+
\C^*
\left(
\Pi_L+E_{\partial\B}
\right)
=
0
\quad
\text{in }\VV_{\mathrm{bulk}}',
\]
which gives the variational meaning of bulk--boundary action exchange.

At second order, the open Hessian is the self-adjoint operator associated with the closed quadratic form obtained by restricting the second variation to \(\VV_{\mathrm{op}}\). For pressure-like boundary couplings this gives a pulled-back form, and the Rayleigh--Ritz criterion on the regulated graph determines the threshold at which positivity and coercivity are lost. Boundary degrees of freedom enter not as auxiliary constraints or external forces, but as regulated contributors to stationarity and stability. The framework is not a modification of variational calculus, but a reformulation of admissibility and stationarity on regulated exchange spaces.\\

\backmatter

\bmhead{Supplementary information}

No supplementary information is included in this draft.

\bmhead{Acknowledgements}

The author thanks Jos\'e A. Santiago and Jes\'us Fern\'andez-Castillo for fruitful discussions.

\begin{appendices}

\section{Classical free-boundary problems as separable limits}\label{app:free_boundary}

Classical free-boundary and transversality problems are recovered as separable limits of graph-space stationarity. Consider
\begin{equation}
S[q]
=
\int_{t_0}^{t_1}L(q,\dot q,t)\,\dd t
+
B(q(t_1),t_1),
\label{eq:appendix_free_endpoint_action}
\end{equation}
with \(q(t_0)\) fixed. Its first variation is
\begin{equation}
\delta S
=
\int_{t_0}^{t_1}
\left(
\frac{\partial L}{\partial q}
-
\frac{\dd}{\dd t}
\frac{\partial L}{\partial \dot q}
\right)
\delta q\,\dd t
+
\left[
p(t_1)+\nabla B(q(t_1),t_1)
\right]
\delta q(t_1),
\label{eq:appendix_free_endpoint_variation}
\end{equation}
where \(p=\partial L/\partial\dot q\). If \(\delta q(t_1)\) is independently testable, stationarity gives
\begin{equation}
\frac{\dd}{\dd t}
\frac{\partial L}{\partial \dot q}
-
\frac{\partial L}{\partial q}
=
0,
\qquad
p(t_1)+\nabla B(q(t_1),t_1)=0.
\label{eq:appendix_separable_endpoint_conditions}
\end{equation}
This is the finite-dimensional analogue of
\eqref{eq:bulk_EL}--\eqref{eq:open_boundary_condition}.

If the endpoint variation is constrained by
\begin{equation}
\delta q(t_1)=\C\eta,
\label{eq:appendix_endpoint_compatibility}
\end{equation}
then stationarity gives
\begin{equation}
\int_{t_0}^{t_1}
\left(
\frac{\partial L}{\partial q}
-
\frac{\dd}{\dd t}
\frac{\partial L}{\partial \dot q}
\right)
\eta\,\dd t
+
\left[
p(t_1)+\nabla B(q(t_1),t_1)
\right]
\C\eta
=
0
\quad
\forall \eta.
\label{eq:appendix_projected_endpoint_balance}
\end{equation}
Equivalently,
\begin{equation}
E_L
+
\C^*
\left[
p(t_1)+\nabla B(q(t_1),t_1)
\right]
=
0
\label{eq:appendix_projected_endpoint_adjoint}
\end{equation}
on the admissible variation space. Thus the classical transversality condition is the separable endpoint limit of the regulated balance.

\section{Spherical second variation}\label{app:sphere_second_variation}

For a normal deformation \(r=R+u\) of \(S_R^2\), the area and volume expansions are
\begin{align}
A[u]
&=
4\pi R^2
+
\int_{S_R^2}
\left(
\frac{u^2}{R^2}
+
\frac12|\nabla_\Sigma u|^2
\right)
\dd A
+
O(u^3),
\label{eq:area_expansion_appendix}
\\
V[u]
&=
\frac{4\pi R^3}{3}
+
\int_{S_R^2}u\,\dd A
+
\frac{1}{R}
\int_{S_R^2}u^2\,\dd A
+
O(u^3).
\label{eq:volume_expansion_appendix}
\end{align}
After removal of the mean mode, or under a fixed-volume constraint, the linear volume term vanishes. With the convention \(\delta^2S=(1/2)\Q[u]\), the pressure contribution gives
\begin{equation}
\Q_P^{\partial}[u]
=
\frac{2}{R}
\int_{S_R^2}u^2\,\dd A.
\label{eq:appendix_pressure_form_separable}
\end{equation}
For a regulated boundary displacement \(w=\C u\),
\begin{equation}
\Q_P[u]
=
\Q_P^{\partial}[\C u]
=
\frac{2}{R}
\int_{S_R^2}(\C u)^2\,\dd A.
\label{eq:appendix_pressure_form_pulled_back}
\end{equation}
If
\[
\C Y_{\ell m}=c_\ell Y_{\ell m},
\]
then
\begin{equation}
\Q_P[u]
=
\frac{2}{R}
\sum_{\ell,m}
|c_\ell|^2|u_{\ell m}|^2,
\label{eq:appendix_pressure_form_harmonic}
\end{equation}
which yields the regulated eigenvalues and thresholds in
\eqref{eq:spherical_regulated_eigenvalues}--\eqref{eq:spherical_regulated_threshold}.

\section{Quadratic-form interpretation of the open Hessian}\label{app:quadratic_forms}

Let \(\Q\) be a densely defined, closed, semibounded symmetric quadratic form on a Hilbert space \(\HH\). In the regulated setting, its form domain is the graph space
\begin{equation}
\VV_{\mathrm{op}}
=
\operatorname{Graph}(\C)
=
\{(u,\C u):u\in\VV_{\mathrm{bulk}}\},
\label{eq:appendix_graph_form_domain}
\end{equation}
or an equivalent closed admissible subspace after quotienting null modes. The graph norm and the closedness of the graph are understood in the standard functional-analytic sense \cite{Conway1990,Brezis2011}.

By the representation theorem for closed semibounded quadratic forms, \(\Q\) determines a unique self-adjoint operator \(L\) satisfying \cite{Kato1995,ReedSimon1978}
\begin{equation}
\Q[u,v]
=
\dual{u}{Lv}_{\HH},
\qquad
u\in\VV_{\mathrm{op}},
\quad
v\in\Dom(L)\subset\VV_{\mathrm{op}}.
\label{eq:appendix_representation_theorem}
\end{equation}
The operator domain consists of those \(v\in\VV_{\mathrm{op}}\) for which the map
\[
u\mapsto\Q[u,v]
\]
is continuous in the \(\HH\)-norm.

If the second variation decomposes into bulk, boundary and mixed parts, restriction to the graph gives
\begin{equation}
\Q_{\mathrm{op}}[u]
=
\Q_{\mathrm{bulk}}[u]
+
\Q_{\partial}[\C u]
+
2\Q_{\mathrm{mix}}[u,\C u].
\label{eq:appendix_general_graph_form}
\end{equation}
In the additive pressure-like case,
\begin{equation}
\Q_\Pi[u]
=
\Q_G[u]
-
\Pi\Q_P^{\partial}[\C u].
\label{eq:appendix_pressure_graph_form}
\end{equation}
Thus
\[
\Leff=\LG-\Pi\LP
\]
is an operator representation of the pulled-back graph form, not an unconstrained subtraction of independent bulk and boundary operators.

\end{appendices}

\section*{Declarations}

\bmhead{Funding}
Financial support was provided by the Agencia Estatal de Investigación (AEI, Spain)
under Grants TED2021-132296B-C52 and CPP2024-011880, and by Fundación
BBVA--Programa Fundamentos 2025.

\bmhead{Competing interests}
The author declares no competing interests.

\bmhead{Ethics approval and consent to participate}
Not applicable.

\bmhead{Consent for publication}
Not applicable.

\bmhead{Data availability}
No datasets were generated or analysed during the current study.

\bmhead{Materials availability}
Not applicable.

\bmhead{Code availability}
Not applicable.

\bmhead{Author contribution}
F.M. conceived the framework, developed the mathematical formulation and wrote the manuscript.

\bibliography{bibliography-CMP}

\end{document}